\newcommand{\mC}{\mathcal{C}}
\spnewtheorem{algo}{Algorithm}{\bfseries}{\normalfont}
\spnewtheorem{obs}{Observation}{\bfseries}{\itshape}
\title{3-Colourability of Dually Chordal Graphs in Linear Time}
\author{Arne Leitert}
\institute{
    University of Rostock, Germany \\
    \email{arne.leitert@uni-rostock.de}
}
\begin{document}

\pagestyle{plain}

\maketitle

\begin{abstract}
    A graph $G$ is dually chordal if there is a spanning tree $T$ of $G$ such that any maximal clique of $G$ induces a subtree in $T$. This paper investigates the Colourability problem on dually chordal graphs. It will show that it is NP-complete in case of four colours and solvable in linear time with a simple algorithm in case of three colours. In addition, it will be shown that a dually chordal graph is 3-colourable if and only if it is perfect and has no clique of size four.
\end{abstract}

\section{Introduction}
Colouring a graph is the problem of finding the minimal number of colours required to assign each vertex a colour such that no adjacent vertices have the same colour. It is a classic problem in computer science and one of Karp's~21 NP-complete problems \cite{Karp72}. Colouring remains NP-complete if bounded to three colours \cite{Stockmeyer1973}. Also, if a 3-colourable graph is given, it remains NP-hard to find a colouring, even if four colours are allowed \cite{4col3colNPhard}.
  
Dually chordal graphs are closely related to chordal graphs, hypertrees and $\alpha$-acyclic hypergraphs, which also explains the name of this graph class (see \cite{duallyChordal} for more). Section~\ref{sec:DuallyChordal} will show, that Colouring is NP-complete for dually chordal graphs if bounded to four colours. So the aim of this paper is to find a linear time algorithm for the 3-Colouring problem.

Next to the classic vertex colouring problem, there are the problems of colouring only edges or edges and vertices of a graph. For dually chordal graphs they have been investigated by de~Figueiredo et~al.~\cite{deFigueiredo1999147}.

\section{Basic Notions}\label{sec:BasicNotions}
Let $G=(V,E)$ be a graph with the vertex set $V$, the edge set $E$. For this paper, any graph is finite, undirected, connected  and without loops or multiple edges. For a set $U \subseteq V$, $G[U]$ denotes the induced subgraph of $G$ with the vertex set $U$. Let $\overline{G}=(V,\overline{E})$ be the complement of $G$, such that $\overline{E} = \{ uv \mid u,v \in V; u \neq v; uv \notin E \}$.

A set of vertices $S \subseteq V$ is a \emph{clique} if for each pair $u,v \in S$ ($u \neq v$), $u$ and $v$ are adjacent. A clique of size $i$ is denoted as $K_i$. The number of vertices in a maximum clique in $G$ is the \emph{clique number $\omega(G)$} of $G$. Let a \emph{diamond} be a graph with four vertices and five edges. The edge connecting the vertices with degree 3 is called \emph{mid-edge}. A \emph{chordless cycle} $C_k$ has $k$ vertices $v_1, \ldots, v_k$ and the edges $v_iv_{i+1}$ (index arithmetic modulo $k$). If a chordless cycle has at least five vertices, it is also called $hole$. A \emph{wheel} $W_k$ is a $C_k$ plus a vertex $v_w$ and the edges $v_wv_i$ ($1 \leq i \leq k$). A chordless cycle $C_k$ or wheel $W_k$ is \emph{odd} if $k$ is odd.

Let $N(v) := \{ u \in V \mid uv \in E \}$ denote the \emph{open} and $N[v] := N(v) \cup \{v\}$ the \emph{closed neighbourhood} of the vertex $v$. A graph is \emph{locally connected} if for all vertices $v$ the open neighbourhood $N(v)$ is connected.

A vertex $u \in N[v]$ is a \emph{maximum neighbour} of $v$, if for all $w \in N[v]$, $N[w] \subseteq N[u]$ holds. Note that $u = v$ is not excluded. A vertex ordering $(v_1, \ldots, v_n)$ is a \emph{maximum neighbourhood ordering} if every $v_i$ ($1 \leq i \leq n$) has a maximum neighbour in $G[\{v_i, \ldots, v_n \}]$.

A vertex $v$ is an articulation point of $G$ if $G[V \setminus \{v\}]$ is not connected. If a graph has no articulation point it is \emph{biconnected} (or \emph{2-connected}). Maximal biconnected subgraphs are called \emph{blocks}. Any connected graph can decomposed into a block tree. If two blocks intersect, they have exactly one common vertex which is an articulation point.
 
\bigskip

An \emph{independent set} is a vertex set $I \subseteq V$ such that for all vertices $u,v \in I$, $uv \notin E$ holds. A graph is \emph{$k$-colourable} if $V$ can be partitioned into $k$ independent sets $V_1, \ldots, V_k$ with $V = V_1 \cup \ldots \cup V_k$ and $V_i \cap V_j = \emptyset$ ($i \neq j$). The \emph{chromatic number} $\chi(G)$ of a graph $G$ is the lowest $k$ such that $G$ is $k$-colourable. The \emph{$k$-Colourability} problem asks if a graph is $k$-colourable. Accordingly the \emph{Colourability} problem asks for the chromatic number of a graph. It is easy to see, that 2-Colourability can be solved in linear time for each graph.

A graph $G$ is \emph{perfect} if for each induced subgraph $G^*$ of $G$ the chromatic and the clique number are equal ($\omega(G^*) = \chi(G^*)$). Chudnovsky et~al.~\cite{StrongPerfectGraph} have shown that a graph is perfect if and only if it is $\left( C_{2n+5}, \overline{C_{2n+5}} \right)$-free for all $n \geq 0$. This is known as the Strong Perfect Graph Theorem.  Perfect graphs can be recognised and coloured in polynomial time \cite{RecogBerge}\cite{ColorPerfect}.

For a graph $G$, $K(G)$ denotes the \emph{clique graph} of $G$, where each vertex in $K(G)$ represents a maximal clique of $G$ and two vertices are connected if the corresponding cliques have a common vertex, i.e. $K(G)=(V, E)$ with $V = \{ k \mid \text{$k$ is a maximal clique of $G$}\}$ and $E= \{ k_1k_2 \mid k_1 \cap k_2 \neq \emptyset \}$. A graph $G$ is \textit{chordal} if it is $C_k$-free ($k \geq 4$). A graph is \emph{clique-chordal} if its clique graph is chordal.

\section{Dually Chordal Graphs}\label{sec:DuallyChordal}
Dually chordal graphs are originally defined by a maximum neighbourhood ordering.

\begin{definition}
    A graph is \emph{dually chordal} if and only if it has a maximum neighbourhood ordering.
\end{definition}

Brandstädt et~al.~\cite{duallyChordal} give an overview of characterisations for dually chordal graphs. One of it allows to recognise graphs of this class in linear time (with an algorithm by Tarjan and Yannakakis~\cite{RecogAlgor}). Another characterisation was developed in~\cite{LeitertDiploma2012}:
 
\begin{lemma}[\cite{duallyChordal}, \cite{LeitertDiploma2012}]\label{theo:dcDef}
    Let $P_T(u,v)$ the set of vertices on the path from $u$ to $v$ in the tree $T$, with $u,v \notin P_T(u,v)$. For a graph $G=(V,E)$ the following conditions are equivalent:
    \begin{enumerate}
        \item  $G$ is dually chordal.
        \item \label{item:dcDefCliqueTree} There is a spanning tree $T$ of $G$ such that every maximal clique of $G$ induces a subtree in $T$.
        \item \label{item:dcDefPathTree} There is a spanning tree $T$ of $G$ such that for every edge $uv \in E$ the following holds: $\forall\,w \in P_T(u,v): uw, vw \in E$
    \end{enumerate}
\end{lemma}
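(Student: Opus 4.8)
The plan is to take the equivalence (1) $\Leftrightarrow$ (2) as given, since it is the classical characterisation of dually chordal graphs established in \cite{duallyChordal}: condition~(1) is the definition via a maximum neighbourhood ordering, and condition~(2) is the associated ``maximum neighbourhood tree''. The real work is therefore the equivalence (2) $\Leftrightarrow$ (3), and I would in fact prove the stronger statement that \emph{a single spanning tree $T$ satisfies~(2) if and only if it satisfies~(3)}; the existence statements then follow at once, and the cycle closes through the cited (1) $\Leftrightarrow$ (2).

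For the direction (2) $\Rightarrow$ (3) I would argue directly. Let $T$ satisfy~(2) and let $uv \in E$. The edge $\{u,v\}$ is a clique, so it is contained in some maximal clique $C$. Since $C$ induces a subtree of $T$ and $u,v \in C$, the whole tree path between $u$ and $v$ stays inside $C$; hence every $w \in P_T(u,v)$ belongs to $C$ and, $C$ being a clique, is adjacent to both $u$ and $v$. This is the easy direction.

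The main step is (3) $\Rightarrow$ (2). Let $T$ satisfy~(3) and let $C$ be a maximal clique. To show that $C$ induces a subtree it suffices to check that for any $u,v \in C$ every interior vertex $w \in P_T(u,v)$ lies in $C$. Because $C$ is maximal, it is enough to prove that such a $w$ is adjacent to \emph{every} vertex $x \in C$: then $C \cup \{w\}$ is a clique, forcing $w \in C$. Condition~(3) only gives adjacency of $w$ to $u$ and $v$ directly, so the idea is to route through the third vertex $x$ using the tree structure. I would consider the median $m$ of $u$, $v$, $x$ in $T$, the unique vertex lying on all three pairwise tree paths. Since $w$ lies on $P_T(u,v)$, it lies either on the segment $P_T[u,m]$ or on $P_T[m,v]$; in the first case $w$ also lies on $P_T[u,x]$, in the second on $P_T[v,x]$. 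Applying~(3) to the edge $ux$ (respectively $vx$), both of which are present because $C$ is a clique, then yields $wx \in E$.

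The point that needs the most care is this routing argument together with its degenerate cases: when $w$ coincides with the median $m$, when $m \in \{u,v,x\}$, or when $w$ equals one of the endpoints, one must confirm that $w$ genuinely falls in the \emph{interior} $P_T(u,x)$ or $P_T(v,x)$ before invoking~(3). Once adjacency of $w$ to all of $C$ is secured, the maximality argument gives $w \in C$, so $C$ induces a subtree. Combining both directions establishes (2) $\Leftrightarrow$ (3) for the same $T$, which with the cited (1) $\Leftrightarrow$ (2) completes the proof.
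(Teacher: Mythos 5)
The paper gives no proof of this lemma at all: it is stated as a cited result from \cite{duallyChordal} and \cite{LeitertDiploma2012}, so there is nothing to match your argument against line by line. Your proposal is nonetheless correct. Taking $(1) \Leftrightarrow (2)$ from \cite{duallyChordal} is legitimate (that is the classical hypertree/maximum-neighbourhood-ordering characterisation), and your proof of $(2) \Leftrightarrow (3)$ is sound: the direction $(2) \Rightarrow (3)$ follows because the edge $uv$ extends to a maximal clique whose induced subtree must contain the unique $u$--$v$ path of $T$; the direction $(3) \Rightarrow (2)$ works via the tree median $m$ of $u$, $v$, $x$, which places any $w \in P_T(u,v)$ on $P_T[u,x]$ or $P_T[v,x]$, so that condition~(3) applied to the clique edge $ux$ or $vx$ yields $wx \in E$, and maximality of $C$ then forces $w \in C$. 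The degenerate cases you flag are the right ones and all resolve benignly ($w = x$ gives $w \in C$ immediately; $w \neq u$ and $w \neq v$ hold because $w$ is interior to $P_T(u,v)$). A worthwhile observation: because you prove the per-tree equivalence --- the \emph{same} $T$ satisfies~(2) if and only if it satisfies~(3) --- your argument also establishes the paper's Lemma~\ref{lem:CliqueTreeEqualPathTree}, which the paper again only cites; so your route delivers strictly more than the paper records.
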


\begin{lemma}[\cite{LeitertDiploma2012}]\label{lem:CliqueTreeEqualPathTree}
	 The spanning trees in conditions~\ref{item:dcDefCliqueTree} and \ref{item:dcDefPathTree} of Lemma~\ref{theo:dcDef} are equal. 
\end{lemma}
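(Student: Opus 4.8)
The statement asserts that the existential quantifiers in conditions \ref{item:dcDefCliqueTree} and \ref{item:dcDefPathTree} can be satisfied by the \emph{same} tree, i.e.\ that a \emph{fixed} spanning tree $T$ fulfils \ref{item:dcDefCliqueTree} if and only if it fulfils \ref{item:dcDefPathTree}. The plan is therefore to fix $T$ and prove the two implications separately. Throughout I would use the elementary fact that a vertex set $U$ induces a subtree of $T$ (a connected induced subgraph) exactly when, for every $u,v \in U$, all vertices lying on the unique $T$-path between $u$ and $v$ already belong to $U$.

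The direction \ref{item:dcDefCliqueTree} $\Rightarrow$ \ref{item:dcDefPathTree} is the easy one. Given an edge $uv \in E$, the clique $\{u,v\}$ extends to some maximal clique $C$. By assumption $C$ induces a subtree of $T$, so the $T$-path between $u$ and $v$ stays inside $C$; hence every internal vertex $w \in P_T(u,v)$ lies in $C$ together with $u$ and $v$, which immediately gives $uw, vw \in E$.

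The direction \ref{item:dcDefPathTree} $\Rightarrow$ \ref{item:dcDefCliqueTree} is where the real work is. Let $C$ be a maximal clique and take $u,v \in C$; I must show that every internal vertex $w \in P_T(u,v)$ lies in $C$. The path condition already yields $uw, vw \in E$, but to conclude $w \in C$ I need $w$ to be adjacent to \emph{every} vertex $x \in C$: then $C \cup \{w\}$ is a clique, and maximality of $C$ forces $w \in C$, so the path stays inside $C$ and $C$ induces a subtree. Promoting adjacency with the two endpoints $u,v$ to adjacency with all of $C$ is the crux of the lemma. I would settle it with a tree-geometry argument using the median $m$ of $u$, $v$, $x$ in $T$: since $w$ lies on the $u$--$v$ path, it lies on the $u$--$m$ segment or the $m$--$v$ segment, and these are subpaths of the $u$--$x$ path and the $v$--$x$ path respectively. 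Thus $w$ is an internal vertex of $P_T(u,x)$ or of $P_T(v,x)$, and because $ux, vx \in E$ (both endpoints lie in the clique $C$), the path condition delivers $wx \in E$ in either case. The degenerate situations $w = m$ and $w = x$ are checked directly.

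The main obstacle is this second implication, and within it the step from ``$w$ is adjacent to $u$ and $v$'' to ``$w$ is adjacent to all of $C$''; everything else is bookkeeping about connectivity of induced subgraphs of a tree.
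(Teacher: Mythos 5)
Your proof is correct. Note that the paper itself gives no proof of this lemma --- it is cited from the author's diploma thesis --- so there is nothing in the text to compare against; your argument stands as a self-contained verification. The easy direction (clique-tree $\Rightarrow$ path condition) is exactly as you say: a connected subgraph of a tree containing $u$ and $v$ must contain the unique $T$-path between them. For the converse, your identification of the crux --- promoting ``$w$ adjacent to $u$ and $v$'' to ``$w$ adjacent to all of $C$'' so that maximality forces $w \in C$ --- is right, and the median argument closes it cleanly: since $w \in P_T(u,v)$ lies on the $u$--$m$ or $m$--$v$ segment, it is an internal vertex of $P_T(u,x)$ or $P_T(v,x)$ (after discharging $w = x$), and applying the path condition to the edge $ux$ or $vx$ yields $wx \in E$. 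The final step, that closure of $C$ under $T$-paths implies $T[C]$ is connected, is also handled. I see no gap.
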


Since $P_T(u,v)$ does not include the vertices $u$ and $v$, (according to the notation for the open and closed neighbourhood) let $P_T[u,v] := P_T(u,v) \cup \{u,v\}$.

\begin{corollary}
    If $uv \in E$ then $P_T[u,v]$ is a clique.
\end{corollary}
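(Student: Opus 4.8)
The plan is to use condition~\ref{item:dcDefPathTree} of Lemma~\ref{theo:dcDef} twice: once directly, to settle all adjacencies involving the endpoints $u$ and $v$, and once in a bootstrapped form to settle adjacencies between interior vertices of the tree path. Throughout, $T$ denotes the spanning tree of condition~\ref{item:dcDefPathTree}. Write the (unique) path in $T$ from $u$ to $v$ as $u = x_0, x_1, \ldots, x_k, x_{k+1} = v$, so that $P_T(u,v) = \{x_1, \ldots, x_k\}$ and $P_T[u,v] = \{x_0, x_1, \ldots, x_{k+1}\}$. To prove the claim it suffices to show that every pair of distinct vertices in this set is adjacent.

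First I would dispose of the easy pairs. Since $uv \in E$ by hypothesis, condition~\ref{item:dcDefPathTree} applied to the edge $uv = x_0 x_{k+1}$ yields $u x_i \in E$ and $v x_i \in E$ for every interior vertex $x_i$ ($1 \le i \le k$). Together with the given edge $uv$, this covers every pair in which at least one endpoint is $u$ or $v$.

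The remaining, and only nontrivial, case is a pair of interior vertices $x_i, x_j$ with $1 \le i < j \le k$. Here I would bootstrap: the previous step already guarantees that $u x_j = x_0 x_j \in E$, so condition~\ref{item:dcDefPathTree} may be applied a second time, now to the edge $x_0 x_j$. The key observation is that, in a tree, the path between two vertices is unique, so the $T$-path from $x_0$ to $x_j$ is exactly the subpath $x_0, x_1, \ldots, x_j$; its interior is $\{x_1, \ldots, x_{j-1}\}$, which contains $x_i$ because $i < j$. Condition~\ref{item:dcDefPathTree} then forces $x_i$ to be adjacent to both $x_0$ and $x_j$, giving $x_i x_j \in E$.

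Having verified adjacency for all pairs, I would conclude that $P_T[u,v]$ is a clique. The one step that requires care — and the main obstacle — is this second application of condition~\ref{item:dcDefPathTree}: it is legitimate only because $x_0 x_j$ is genuinely an edge of $G$ (established in the first step) and because the tree path realising $P_T(x_0, x_j)$ is the relevant subpath of the original path, a fact that relies on the uniqueness of paths in $T$.
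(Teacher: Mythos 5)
Your argument is correct. Every step checks out: condition~\ref{item:dcDefPathTree} of Lemma~\ref{theo:dcDef} is quantified over \emph{all} edges of $G$, so once the first application establishes $x_0x_j \in E$ you are entitled to apply it again to that edge, and the uniqueness of paths in $T$ does guarantee that $P_T(x_0,x_j) = \{x_1,\ldots,x_{j-1}\} \ni x_i$, which yields $x_ix_j \in E$. However, the paper leaves the corollary unproved because the intended derivation is shorter and goes through the clique-tree characterisation instead: since $uv \in E$, the vertices $u$ and $v$ lie in a common maximal clique $K$; by condition~\ref{item:dcDefCliqueTree} together with Lemma~\ref{lem:CliqueTreeEqualPathTree} (which says the same tree $T$ can be used for both conditions), $K$ induces a subtree of $T$, and a connected subgraph of a tree contains the unique $T$-path between any two of its vertices, so $P_T[u,v] \subseteq K$ is a clique. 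Your bootstrapped version buys self-containedness --- it uses only condition~\ref{item:dcDefPathTree} and never needs Lemma~\ref{lem:CliqueTreeEqualPathTree} or the notion of a maximal clique inducing a subtree --- at the cost of a slightly longer case analysis; both routes are sound.
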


Is is easy to see that an arbitrary graph becomes dually chordal if a vertex adjacent to all vertices is added. This leads to a reduction for several NP-complete problems, including Colourability. The idea for this method was already given by Brandstädt et~al.~\cite{Brandstaedt199843}.

\begin{theorem}\label{theo:4colDC}
	4-Colourability is NP-complete for dually chordal graphs.
\end{theorem}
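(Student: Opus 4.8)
The plan is to show NP-completeness by reduction from 3-Colourability, which is stated in the introduction to be NP-complete. The key observation, already hinted at in the text immediately before the theorem, is that adding a universal vertex to an arbitrary graph produces a dually chordal graph, and that this single added vertex consumes exactly one colour.

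Let me think about the construction. Given an arbitrary graph G, form G' by adding a new vertex u adjacent to every vertex of G. First I would verify that G' is dually chordal. The cleanest way is to use condition 2 of Lemma theo:dcDef (the clique-tree characterisation): take the spanning tree T of G' to be the star centred at u, i.e. u adjacent to every original vertex. Every maximal clique of G' contains u (since u is universal, any maximal clique must include u), so every maximal clique is of the form C ∪ {u} where C is a clique of G. Such a set induces a subtree in the star T — in fact a substar centred at u — so condition 2 holds, and G' is dually chordal. (Alternatively one could exhibit a maximum neighbourhood ordering directly: u is a maximum neighbour of every vertex, so the ordering (v_1,…,v_n,u) works.)

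The core of the reduction is the equivalence: G is 3-colourable if and only if G' is 4-colourable. For the forward direction, a proper 3-colouring of G extends to a proper 4-colouring of G' by assigning u a fourth colour; u is adjacent to all original vertices but uses a colour none of them uses, so no conflict arises. For the reverse direction, suppose G' is 4-colourable. Since u is adjacent to every other vertex, u's colour is distinct from all of them, so the remaining vertices use at most three colours; restricting the colouring to V(G) gives a proper 3-colouring of G. Finally, the reduction is polynomial (indeed linear): G' has one extra vertex and n extra edges, and membership in NP is immediate since a proposed 4-colouring can be checked in linear time.

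I expect no genuine obstacle here; the main thing to get right is the verification that G' is dually chordal, and the one subtlety worth stating carefully is that every maximal clique of G' contains the universal vertex u, which is what makes the star a valid clique tree. Everything else is a routine colour-counting argument.
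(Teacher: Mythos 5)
Your proposal is correct and follows exactly the paper's own argument: add a universal vertex $u$ to an arbitrary graph $G$ to obtain a dually chordal graph, and observe that $G$ is 3-colourable if and only if the new graph is 4-colourable. Your additional verification that the star centred at $u$ is a valid clique tree (and that every maximal clique contains $u$) fills in a detail the paper leaves as "easy to see," but the reduction is the same.
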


\begin{proof}
	Let $G=(V,E)$ be an arbitrary graph. Thus, $G_{dc}=(V \cup \{ u \}, E \cup \{uv \mid v \in V \})$ with $u \notin V$ is dually chordal. It follows that $G$ is 3-colourable if and only if $G_{dc}$ is 4-colourable. Because 3-Colourability is NP-complete in general~\cite{Stockmeyer1973}, 4-Colourability is NP-complete for dually chordal graphs.
	\qed
\end{proof}

\section{$K_4$-free Dually Chordal Graphs}\label{sec:K4Free}
%The proof of Theorem~\ref{theo:4colDC} does not work for 3-Colourability. 2-Colourability is not NP-complete for graphs. So 3-Colourability is still an open question. 
Obviously a graph has to be $K_4$-free to be 3-colourable. So this section investigates $K_4$-free dually chordal graphs.

For this section let $G=(V,E)$ be a $K_4$-free dually chordal graph. The spanning tree $T=(V,E_t)$ of $G$ and the set $P_T[u,v]$ are defined as in condition~\ref{item:dcDefPathTree} of Theorem~\ref{theo:dcDef} and the paragraph after Lemma~\ref{lem:CliqueTreeEqualPathTree}.

\begin{corollary}\label{cor:K4FreePEdges}
    If $uv \in E$ then $|P_T[u,v]| \leq 3$, i.e. $uv \in E \land uv \notin E_t \Rightarrow \exists!\, w: uw,wv \in E_t$.
\end{corollary}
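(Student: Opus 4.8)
The plan is to reduce the entire statement to the preceding corollary (the one asserting that $P_T[u,v]$ is a clique whenever $uv \in E$) and then to exploit the $K_4$-freeness of $G$. The only genuine input is the clique property; everything else is size bookkeeping together with the uniqueness of paths in a tree, so I do not expect any real obstacle.

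First I would establish the bound $|P_T[u,v]| \leq 3$. Since $uv \in E$, the set $P_T[u,v]$ is a clique by the corollary following Lemma~\ref{lem:CliqueTreeEqualPathTree}. Because $G$ is $K_4$-free, every clique of $G$ has at most three vertices, so $|P_T[u,v]| \leq 3$ follows at once. This already settles the first part of the statement.

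For the second part I would additionally assume $uv \notin E_t$. Then $u$ and $v$ are non-adjacent in the tree $T$, so the unique $T$-path joining them has length at least two and hence passes through a further vertex; thus $P_T(u,v) \neq \emptyset$. Since $P_T[u,v] = P_T(u,v) \cup \{u,v\}$ with $u \neq v$, I get $|P_T(u,v)| = |P_T[u,v]| - 2 \leq 1$, and combined with non-emptiness this forces $|P_T(u,v)| = 1$. Writing $P_T(u,v) = \{w\}$, the $T$-path from $u$ to $v$ reads $u, w, v$, which yields $uw, wv \in E_t$.

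Finally I would dispatch the uniqueness claim purely from the tree structure: in a tree the path between any two vertices is unique, so any $w'$ with $uw', w'v \in E_t$ would lie on the $T$-path from $u$ to $v$ and therefore coincide with $w$. Hence $w$ is the unique such vertex, completing the argument.
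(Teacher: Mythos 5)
Your proof is correct and is exactly the argument the paper leaves implicit: the clique property of $P_T[u,v]$ combined with $K_4$-freeness bounds the size, and uniqueness of tree paths gives the unique intermediate vertex $w$. Nothing to add.
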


For the next lemma the notion of a \emph{route} is defined:

\begin{definition}
    A \emph{route} on $G$ is a list of vertices $(v_1,\ldots,v_k)$ such that $v_iv_{i+1} \in E$ or $v_i=v_{i+1}$ for all $i < k$.
\end{definition}

\begin{lemma}\label{lem:CircleWheel}
    Let $\mC$ be a $C_k$ in $G$ with $k \geq 4$. There is a vertex $w$ such that $\mC$  and $w$ form a wheel. Also no edge of $\mC$ is in $T$.
\end{lemma}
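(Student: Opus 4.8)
The plan is to read the cycle $\mC=(v_1,\ldots,v_k)$ through the spanning tree $T$ using Corollary~\ref{cor:K4FreePEdges}. Each cycle edge $v_iv_{i+1}\in E$ is either a tree edge, or, being a non-tree edge, has a \emph{unique} tree-midpoint $w_i$ with $v_iw_i,\,w_iv_{i+1}\in E_t$; in the latter case $w_i$ is adjacent in $G$ to both $v_i$ and $v_{i+1}$. The goal is to show that no cycle edge is a tree edge and that all midpoints coincide in one vertex $w$, which is then adjacent to every $v_i$, so that $\mC+w$ is a wheel.

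The central device I would use is to concatenate the tree-paths $P_T[v_1,v_2],P_T[v_2,v_3],\ldots,P_T[v_k,v_1]$ into a single closed walk in $T$. By Corollary~\ref{cor:K4FreePEdges} each such path has length $1$ (a tree edge) or $2$ (through a midpoint). Writing $t$ for the number of cycle edges lying in $T$, the total length of this closed walk is therefore $t + 2(k-t) = 2k - t$. Let $T''$ be the subtree of $T$ spanned by the walk; its vertices are precisely the $v_i$ together with the set $W$ of distinct midpoints. A short argument shows no midpoint is a cycle vertex: a midpoint $w_i$ is adjacent to the consecutive pair $v_i,v_{i+1}$, whereas any cycle vertex $v_j$ is adjacent inside $\mC$ only to $v_{j-1}$ and $v_{j+1}$, which for $k\geq 4$ are non-consecutive and hence non-adjacent; equating the two pairs would force the edge $v_{j-1}v_{j+1}$, contradicting chordlessness. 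Thus $|V(T'')| = k + |W|$ and, $T''$ being a tree, $|E(T'')| = k + |W| - 1$.

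The key fact driving the estimate is that a closed walk in a tree crosses every edge an even, hence positive-even, number of times (removing an edge separates $T$, and the walk must return to its start), so the walk length is at least $2\,|E(T'')|$. Combining this with the length count yields $2k - t \geq 2(k + |W| - 1)$, i.e. $|W| \leq 1 - t/2$. Since $k\geq 4$ the cycle has at least one edge and every non-tree edge contributes a midpoint, so $|W|\geq 1$ whenever $t<k$. A brief case check on $t$ then leaves only $t=0$ with $|W|=1$: any $t\geq 1$ forces $|W|=0$ while still requiring a midpoint, a contradiction. Hence no cycle edge is a tree edge, and the single common midpoint $w$ is adjacent to all of $v_1,\ldots,v_k$, so $\mC$ and $w$ form a wheel.

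I expect the step requiring the most care to be the exact bookkeeping relating the closed-walk length to the size of $T''$, in particular verifying that the midpoints are genuinely disjoint from the cycle vertices so the count $|V(T'')| = k + |W|$ is exact. The even-crossing property of closed walks in a tree is the crucial inequality, and once it is in place the arithmetic collapses immediately to the desired conclusion.
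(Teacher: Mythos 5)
Your proposal is correct, and while it starts from the same construction as the paper --- concatenating the paths $P_T[v_i,v_{i+1}]$ given by Corollary~\ref{cor:K4FreePEdges} into a closed walk in $T$ --- the way you extract the conclusion is genuinely different. The paper treats the route as a subtree and eliminates it by repeatedly removing leaves, with a case analysis on whether the leaf is a cycle vertex $c_i$ or a midpoint $w_i$, eventually forcing $w_1=\cdots=w_k$ and $c_i\neq w_i$. You instead run a global counting argument: the walk has length $2k-t$, the spanned subtree $T''$ has $k+|W|-1$ edges (using your correct observation that midpoints cannot be cycle vertices, since that would force a chord $v_{j-1}v_{j+1}$), and the even-crossing property of closed walks in a tree gives $2k-t\geq 2(k+|W|-1)$, hence $|W|\leq 1-t/2$, which collapses to $t=0$ and $|W|=1$. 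This parity argument is arguably cleaner and more obviously airtight than the paper's leaf-peeling, which is stated rather informally (``Continuing this procedure, it follows that\dots''). One tiny point in your case check: for $t=k$ (all cycle edges in $T$) the phrase ``still requiring a midpoint'' does not apply, but that case is killed even more directly, either because $|W|\leq 1-k/2<0$ is already absurd for $k\geq 4$, or because a tree contains no cycle; it is worth saying one of these explicitly.
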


\begin{proof}
    Let $\mC$ be a chordless cycle with the vertices $\{c_1, \ldots, c_k\}$ ($k \geq 4$) and the edges $c_ic_{i+1}$ (index arithmetic modulo $k$). Based on Corollary~\ref{cor:K4FreePEdges} for every edge there is the set of vertices $P_T[c_i, c_{i+1}] = \{ c_i, w_i, c_{i+1} \}$ with $c_i = w_i$ if and only if $c_ic_{i+1} \in E_t$. Thus, it is possible to build a route $\rho=(c_1,w_1,c_2,\ldots,w_k,c_1)$ on $T$ (see Figure~\ref{fig:circleWheel}).

\begin{figure}[htb]
    \centering
    \includegraphics{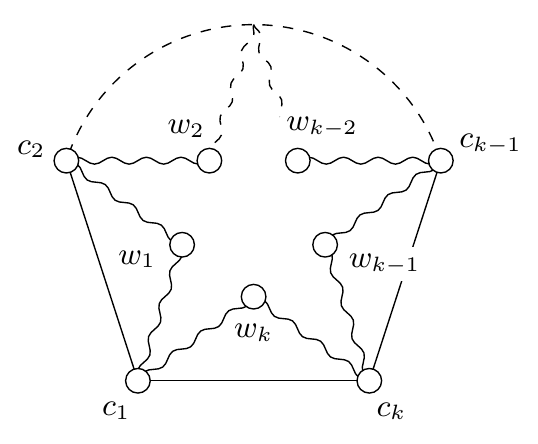}
    \caption{The circle $\mC$ and the route $\rho$ (waved).}
    \label{fig:circleWheel}
\end{figure}

Because all the edges of $\rho$ are in $T$, $\rho$ also induces a tree. Therefore $\rho$ can be eliminated by removing leaves. Let $(\ldots, v_{i-1}, v_i, v_{i+1}, \ldots)$ be a subroute of $\rho$ and $v_i$ a leaf. Thus, if $v_{i-1} \neq v_i$ and $v_i \neq v_{i+1}$ then $v_{i-1} = v_{i+1}$, so $v_{i-1}$ or $v_{i+1}$ can also removed from $\rho$ with $v_i$.

Now assume that no vertex of $\mC$ is a leaf. In this case a vertex $w_i$ is a leaf. So it follows, that $c_i = c_{i+1}$. Therefore there must be a leaf $c_i$ and also $w_{i-1}=w_i$ (otherwise $c_{i-1}c_{i+1} \in E$). This allows to remove $c_i$ and $w_i$ from $\rho$.

Continuing this procedure, it follows that $w_1=\ldots=w_k$ and $c_i \neq w_i$ for all $i$. Thus, no edge $c_ic_{i+1}$ is an edge of $T$ and there is a vertex $w$, such that $\mC$ and $w$ form a wheel.
\qed
\end{proof}

\begin{theorem}\label{theo:3colIffPerfect}
    A dually chordal graph $G$ is 3-colourable if and only if $G$ is perfect and $K_4$-free.
\end{theorem}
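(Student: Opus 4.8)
The plan is to prove the two directions separately, with essentially all of the work falling on the forward implication.

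For the backward direction ($\Leftarrow$) I would argue in one line. If $G$ is perfect then $\chi(G) = \omega(G)$, and since $G$ is $K_4$-free we have $\omega(G) \le 3$; hence $\chi(G) \le 3$ and $G$ is 3-colourable.

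For the forward direction ($\Rightarrow$), suppose $G$ is 3-colourable. That $G$ is $K_4$-free is immediate, since a $K_4$ already forces four colours. To establish perfection I would invoke the Strong Perfect Graph Theorem and show that $G$ is $\left( C_{2n+5}, \overline{C_{2n+5}} \right)$-free for all $n \ge 0$. The key observation is that, apart from the pentagon, every graph in this forbidden family already requires at least four colours, so 3-colourability of $G$ (and hence of each of its induced subgraphs) rules them out directly. For the odd antiholes a short computation via the clique-cover number of $C_{2n+5}$ gives $\chi\left(\overline{C_{2n+5}}\right) = n+3 \ge 4$ whenever $n \ge 1$; the remaining antihole $\overline{C_5}$ coincides with $C_5$ and is treated together with the odd holes below.

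The one case where 3-colourability is not by itself sufficient is the odd holes $C_{2n+5}$, which are themselves 3-colourable. Here I would apply Lemma~\ref{lem:CircleWheel}: since $G$ is $K_4$-free and dually chordal, any such chordless cycle $\mC$ together with a hub vertex $w$ forms a wheel, and on the vertex set of $\mC$ together with $w$ this wheel is an \emph{induced} $W_{2n+5}$. An odd wheel satisfies $\chi(W_{2n+5}) = \chi(C_{2n+5}) + 1 = 4$, contradicting the assumption that $G$ is 3-colourable. Hence $G$ contains no odd hole either, and by the Strong Perfect Graph Theorem $G$ is perfect.

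The main obstacle I anticipate is the bookkeeping in the forward direction: correctly sorting out which members of the forbidden family are excluded purely by the four-colour lower bound and which require Lemma~\ref{lem:CircleWheel}, and verifying that the wheel produced by that lemma is genuinely an induced subgraph, so that restricting a 3-colouring of $G$ to it yields the desired contradiction.
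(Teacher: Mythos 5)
Your proposal is correct and follows essentially the same route as the paper: the backward direction via $\omega(G)\le 3$ and perfection, and the forward direction via the Strong Perfect Graph Theorem, excluding odd antiholes $\overline{C_{2n+5}}$ ($n\ge 1$) because they are not 3-colourable and excluding odd holes via Lemma~\ref{lem:CircleWheel}, which turns any such hole into an odd wheel requiring four colours. Your extra care about the wheel being induced is harmless but unnecessary, since a non-induced odd-wheel subgraph already forces $\chi\ge 4$.
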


\begin{proof}
$\Leftarrow:$ By definition of perfect graphs.

$\Rightarrow:$ $G$ is 3-colourable, so it is $K_4$-free. Let $k \geq 7$. It is easy to see, that no $\overline{C_k}$ is 3-colourable. Therefore, $G$ has no $\overline{C_k}$. From the 3-colourability it also follows that there is no odd wheel in $G$ and by Lemma~\ref{lem:CircleWheel} no odd hole\footnote{also no $\overline{C_6}$}. Thus, by the Strong Perfect Graph Theorem, $G$ is perfect (recall, $C_5 = \overline{C_5}$).
\qed
\end{proof}

Theorem~\ref{theo:3colIffPerfect} now leads to a polynomial time algorithm for 3-Colourability of dually chordal graphs.

\section{Linear Time Algorithm}\label{sec:LinTimeAlgo}
After showing that there is a polynomial time algorithm, this section presents a linear time algorithm for 3-Colourability of dually chordal graphs based on the following theorem:

\begin{theorem}\label{theo:locConConstruction}
    Each connected locally connected graph can be constructed by starting with an arbitrary edge and then add only vertices that has at least two adjacent neighbours in the already constructed graph.
\end{theorem}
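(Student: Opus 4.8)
The plan is to describe the construction as a greedy process and prove, by induction on the number of vertices placed, that it never gets stuck before exhausting $V$. Starting from an arbitrary edge $v_1v_2$, let $U \subseteq V$ be the set of vertices already placed and $H = G[U]$ the graph constructed so far. I would maintain two invariants: first, that $H$ is connected; and second (the target property) that every vertex was added with two mutually adjacent neighbours already present in $U$. The connectivity invariant is immediate, since the initial edge is connected and each newly added vertex is joined to at least one vertex of the current $U$, so adding it keeps $G[U]$ connected.

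The heart of the argument is the extension step: whenever $U \neq V$, I must exhibit a vertex $b \in V \setminus U$ having two adjacent neighbours in $U$. Since $G$ is connected and $U$ is a proper nonempty subset, there is a boundary vertex $u \in U$ with at least one neighbour in $V \setminus U$. Because $H$ is connected and $|U| \geq 2$, this same $u$ is not isolated in $H$ and hence also has a neighbour inside $U$. Therefore $N(u)$ meets both sides of the cut; write $A = N(u) \cap U$ and $B = N(u) \setminus U$, both nonempty.

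Now \emph{local connectivity} enters: $G[N(u)]$ is connected, so there must be an edge of $G$ joining some $a \in A$ to some $b \in B$, as any connected graph admits no partition into two nonempty parts with no crossing edge. This $b$ is the vertex to append. It lies outside $U$; it is adjacent to $a \in U$ and to $u \in U$ (because $b \in N(u)$); and $a$ and $u$ are themselves adjacent (because $a \in N(u)$). Thus $b$ has the two mutually adjacent neighbours $a, u \in U$ demanded by the construction rule, and adding $b$ re-establishes both invariants with $|U|$ increased by one. Iterating until $U = V$ completes the construction.

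The main obstacle is exactly the production of the cross-cut edge $ab$ inside $N(u)$: plain connectivity of $G$ only guarantees a single neighbour of the new vertex in $U$, and it is local connectivity that upgrades this to two neighbours that are moreover adjacent to one another. The remaining ingredients — connectivity of $H$, existence of a boundary vertex, and termination after $n-2$ steps — are routine bookkeeping once this step is in hand.
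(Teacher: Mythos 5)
Your proposal is correct and uses essentially the same idea as the paper: local connectivity of a boundary vertex's neighbourhood, which meets both sides of the cut $U$ versus $V\setminus U$, forces a crossing edge and hence a new vertex with two mutually adjacent neighbours already placed. Your direct cut-edge formulation is in fact a slightly cleaner rendering of the paper's argument, which reaches the same conclusion by following a path inside the neighbourhood of the unique placed neighbour of a stuck outside vertex.
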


\begin{proof}
Let $(u, v, v_1, \ldots, v_k)$ be the ordering the vertices are added to construct a locally connected graph $G=(V,E)$ starting with an edge $uv$. Also let $V_i := \{u, v, v_1, \ldots, v_i \}$ with $V_0 := \{ u, v \}$ and $0 \leq i \leq k$.

Now let $G_i := G[V_i]$ ($0 \leq i < k$) be a locally connected subgraph of $G$. Assume, there is no vertex~$w$ adjacent to two vertices in $G_i$. Then $G$ is not connected or $w$ has only one neighbour~$v_n$ in $G_i$. In second case, because $G$ is locally connected, there is a vertex set $S \subseteq N(v_n) \setminus V_i$ connecting $w$ with the neighbours of $v_n$ that are already in $G_i$. Thus, there is a vertex $s \in S$ adjacent to $v_n$ and a neighbour of $v_n$ in $G_i$. Therefore, set $v_{i+1} := s$.
\qed
\end{proof}

Theorem~\ref{theo:locConConstruction} allows to colour graphs whose blocks are locally connected by using the following strategy: Select an uncoloured vertex with a minimal number of available colours and give it an available one. Repeat this until all vertices are coloured or there is no available colour for a vertex. Algorithm~\ref{algo:3colLocCon} describes this more detailed:

\begin{algo}\label{algo:3colLocCon}
\ \\
\textbf{Input:} A graph $G=(V,E)$ whose blocks are locally connected.\\
\textbf{Output:} A 3-colouring for $G$ if and only if $G$ is 3-colourable.

\begin{enumerate}
    \item Give every vertex $v$ a set of available colours $c(v) := \{1,2,3\}$. Also every vertex gets the possibility to get marked as coloured. At the beginning no vertex is marked. 

    \item \textbf{While} There are uncoloured vertices
    
    \begin{enumerate}
        \item Select an uncoloured vertex~$v$ for which $|c(v)|$ is minimal.
        \item If $c(v) = \emptyset$, \textbf{STOP:} $G$ is not 3-colourable.
        \item Disable all but one colour in $c(v)$ and mark $v$ as coloured.
        \item Disable $c(v)$ for all neighbours of $v$ ($\forall\,u \in N(v): c(u) := c(u) \setminus c(v)$).
    \end{enumerate}
\end{enumerate}
\end{algo}

\begin{theorem}\label{theo:3colLocConAlog}
    Algorithm~\ref{algo:3colLocCon} works correctly and can be implemented in linear time.
\end{theorem}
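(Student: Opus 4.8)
The plan is to establish two separate claims: correctness of the colouring produced by Algorithm~\ref{algo:3colLocCon}, and a linear-time implementation. These are largely independent, so I would treat them in sequence.

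For correctness, the key invariant I would maintain is that at every step the partial colouring is extendable to a valid $3$-colouring of $G$ \emph{if and only if} $G$ is $3$-colourable. The "only if" direction is immediate: the algorithm never produces an invalid colouring, since step~2(c) assigns $v$ a colour that is still available, and step~2(d) immediately removes that colour from all neighbours, so no two adjacent coloured vertices share a colour. The substantive direction is to show that the greedy choice never forces a dead end on a $3$-colourable graph. Here I would use Theorem~\ref{theo:locConConstruction}: since each block of $G$ is locally connected, its vertices admit an ordering in which, after an initial edge, every newly added vertex has at least two \emph{adjacent} neighbours already present. The claim is that the algorithm's rule --- always pick a vertex of minimum $|c(v)|$ --- effectively colours each block in such an order. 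When a vertex $v$ enters with two adjacent already-coloured neighbours, those two neighbours form an edge and hence use two distinct colours, so $v$ has at most one available colour left; such vertices have minimal $|c(v)|$ and are thus chosen first, and as long as the graph is $3$-colourable that single remaining colour is forced and correct. Across the block tree, I would argue that articulation points do not create conflicts because distinct blocks overlap in only one vertex, so colourings of separate blocks can be reconciled by permuting colours --- this is the point where I would need to be careful and which I expect to be the \textbf{main obstacle}: proving that the minimum-$|c(v)|$ rule genuinely tracks a construction order of the kind guaranteed by Theorem~\ref{theo:locConConstruction}, rather than merely being consistent with one, and that no globally valid colouring is ever excluded by a locally forced choice.

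For the running time, the plan is to show each of the two loop operations --- selecting a minimum-$|c(v)|$ vertex and updating neighbours --- can be amortised to $O(1)$ per incident edge. Since $|c(v)| \in \{0,1,2,3\}$ takes only four values, I would maintain four buckets (say, doubly linked lists) of uncoloured vertices indexed by the current size of $c(v)$, together with pointers from each vertex to its position. Selecting a vertex of minimum available-colour count is then a constant-time scan over four buckets. When $v$ is coloured, step~2(d) walks its adjacency list; each neighbour $u$ has $c(u)$ shrink by at most one colour, which is a constant-time set operation and a single move to a lower bucket, with the position pointer updated in $O(1)$. Because each edge $uv$ is examined only when its first endpoint is coloured, the total update work is $O(|E|)$, and the initialisation in step~1 is $O(|V|)$. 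Hence the algorithm runs in $O(|V| + |E|)$ time. I would close by remarking that recognising dually chordal graphs and computing the block decomposition are themselves linear, so the whole $3$-colourability test stays linear.
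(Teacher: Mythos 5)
Your proposal follows essentially the same route as the paper's proof: correctness via Theorem~\ref{theo:locConConstruction} (within a block the minimum-$|c(v)|$ rule forces the colour of each vertex after the initial edge, and articulation points let blocks be reconciled independently), and linear time via doubly linked lists bucketed by $|c(v)|$ with back-pointers (the paper merges the $|c(v)|\leq 1$ cases into one list, which is only cosmetic). The ``main obstacle'' you flag --- that minimal $|c(v)|$ only guarantees \emph{some} forced vertex is picked, not that it literally follows a construction order --- is glossed over in the paper's proof as well, so you are at the same level of rigour.
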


\begin{proof}
\ \par \emph{Correctness}.
After selecting the first two vertices, the algorithm only selects vertices with two adjacent neighbours, i.e. with $|c(v)| \leq 1$, until all vertices of a locally connected block are coloured (Theorem~\ref{theo:locConConstruction}). Therefore, the colouring is unique for a locally connected block.

Also, if a vertex~$v$ with $|c(v)| = 2$ and the coloured neighbour~$u$ is selected, $u$ is an articulation point. This allows to treat $u$ and $v$ as the first coloured vertices of a locally connected graph. Therefore, each available colour for $v$ leads to a correct colouring.

\medskip

\emph{Complexity}.
It is easy to see, that line~1 runs in linear time and line~2(b) in constant time such as line~2(c). Line~2(d) is bounded by the number of neighbours. Because $\sum_{v \in V} |N(v)| = 2|E|$, the full time needed (overall iterations) for line~2(d) is in ${\cal O}(|E|)$.

Line~2(a) can be implemented by using three doubly linked lists as queues. The lists include the uncoloured vertices. One is for vertices with $|c(v)| = 3$, one for $|c(v)| = 2$ and one for $|c(v)| \leq 1$. If there is also a pointer from each vertex to its entry in the list, adding and removing a vertex can be done in constant time. Thus, line~2(a) runs in constant time.
\qed
\end{proof}

Another linear time algorithm for 3-Colourability of locally connected graphs was presented by Kochol~\cite{LinearThreeColoring}, but is more complicated.

\bigskip

Dually chordal graphs are a subclass of clique-chordal graphs \cite{duallyChordal}. The next lemma will show, that a 3-colouring for clique-chordal graphs can be be found with Algorithm~\ref{algo:3colLocCon}.

\begin{lemma}
    Each block of a clique-chordal graphs is locally connected.
\end{lemma}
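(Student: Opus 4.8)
The plan is to argue by contradiction: assume some block $B$ of the clique-chordal graph $G$ contains a vertex $v$ whose neighbourhood inside $B$ is disconnected, and derive a chordless cycle of length at least four in the clique graph $K(G)$, contradicting that $K(G)$ is chordal. So fix $v \in B$, let $N_B(v)$ denote the set of neighbours of $v$ in $B$, and pick two vertices $a,b \in N_B(v)$ lying in different connected components of $G[N_B(v)]$.

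First I would exploit that $B$ is biconnected to connect $a$ and $b$ without passing through $v$: since no vertex of $B$ is an articulation point, $B - v$ is connected, so there is a path from $a$ to $b$ in $B - v$, and I take a shortest such path, which is therefore induced. Walking along it from $a$ to $b$ and recording which vertices are neighbours of $v$, any two consecutive neighbours of $v$ on the path are adjacent and both adjacent to $v$, hence lie in the same component of $G[N_B(v)]$; as the two endpoints lie in different components, there is a first place where the component changes. This isolates a subpath whose endpoints $w_2,w_k$ are neighbours of $v$ from different components while all interior vertices are non-neighbours of $v$. Together with $v$ these form a chordless cycle $H = (v = w_1, w_2, \ldots, w_k)$ with $k \ge 4$ in which $v$ is adjacent only to $w_2$ and $w_k$. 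Among all such \emph{separating} holes through $v$ I would pick one of minimum length.

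Next I would transfer $H$ to $K(G)$: for each edge $w_iw_{i+1}$ choose a maximal clique $Q_i \supseteq \{w_i, w_{i+1}\}$. Since $H$ is chordless, no other vertex of $H$ is adjacent to both $w_i$ and $w_{i+1}$, so $Q_i \cap H = \{w_i, w_{i+1}\}$; in particular the $Q_i$ are pairwise distinct, and $Q_i, Q_{i+1}$ share $w_{i+1}$, so $Q_1, \ldots, Q_k$ form a cycle of length $k \ge 4$ in $K(G)$. It remains to show this cycle is induced, which would finish the contradiction.

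The main obstacle is ruling out a chord of this cycle. A chord means two non-consecutive cliques $Q_i, Q_j$ share a vertex $z \notin H$, so $z$ is adjacent to all four vertices $w_i, w_{i+1}, w_j, w_{j+1}$; here both the minimality and the separation property must be used together. If $z$ is adjacent to $v$, then $z \in N_B(v)$, and since $z$ would then be adjacent to hole-neighbours of $v$ on both the $w_2$-side and the $w_k$-side, it would join the components of $a$ and $b$ in $G[N_B(v)]$, which is impossible. Otherwise I would use $z$ together with one of the two arcs of $H$ cut off by the edges $w_iw_{i+1}$ and $w_jw_{j+1}$ to build a strictly shorter cycle through $v$ whose two $v$-neighbours still lie in different components, and then re-run the extraction to obtain a shorter separating hole, contradicting minimality. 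The delicate point throughout is exactly this: a common external neighbour $z$ could in principle both shorten the hole and reconnect the neighbourhood, so the reduction must carefully preserve the fact that $a$ and $b$ stay separated while the cycle shrinks. Once chords are excluded, $Q_1, \ldots, Q_k$ is a chordless cycle of length at least four in the chordal graph $K(G)$, the desired contradiction, and hence every block of $G$ is locally connected.
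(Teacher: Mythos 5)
Your overall strategy coincides with the paper's: fix a vertex $v$ of a block whose neighbourhood is disconnected, take a minimum-length chordless cycle through $v$ whose two $v$-neighbours lie in different components of $G[N(v)]$, lift the cycle's edges to maximal cliques that form a cycle of the same length in the clique graph, and use chordality of $K(G)$ to produce a vertex $z$ common to two non-consecutive cliques, from which a contradiction is extracted. Your construction of the initial separating hole from a shortest path in $B-v$ is in fact more explicit than the paper's, which merely asserts that such a cycle exists.

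The gap is in your treatment of the chord vertex $z$. Your dichotomy is: if $z$ is adjacent to $v$ it reconnects the two components, otherwise it shortens the hole. The first half does not hold for a general chord. If the chord joins $Q_i$ and $Q_j$ with $2\le i<j\le k-1$ (so that neither clique contains $v$), the only vertices you know to be adjacent to $z$ are $w_i,w_{i+1},w_j,w_{j+1}$, and none of these is a neighbour of $v$; lying on the ``$w_2$-side'' or ``$w_k$-side'' of the hole does not put them into $N(v)$, so $z\in N(v)$ yields no path in $G[N(v)]$ between the components of $w_2$ and $w_k$. The repair is to observe that $z$ lies in at most one of those two components and to shorten on the other side: if $z$ is not in the component of $w_k$, then $z$ is not adjacent to $w_k$, and taking $w_l$ to be the highest-indexed neighbour of $z$ on the hole (so $l\ge j+1\ge 5$), the cycle $(v,z,w_l,\dots,w_k,v)$ is chordless, strictly shorter, and its two $v$-neighbours $z$ and $w_k$ are still in different components; if instead $z\notin N(v)$, one shortens with $z$ as an interior vertex using the extreme neighbours of $z$ on the hole, keeping $w_2$ and $w_k$ as the $v$-neighbours. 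As written, your proof flags this as the delicate point but does not carry it out, so the reduction is incomplete. (For comparison, the paper sidesteps the same difficulty by asserting, not actually without loss of generality, that the chord is incident to a clique containing $v$; your version at least identifies where the remaining work lies.)
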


\begin{proof}
	Let $G=(V, E)$ be a clique-chordal graph and $K=(V_k, E_k)$ its clique graph. Assume that $G$ has a block which is not locally connected. Then there is a vertex~$v$ such that its neighbourhood is not connected and it is no articulation point. Therefore, there is a chordless cycle ${\cal C}$ fulfilling the conditions that there is no path in $G[N(v)]$ connecting the two neighbours of $v$ in ${\cal C}$. 
	
	Let ${\cal C}_G$ be the smallest ${\cal C}$ and using the following notation: $u$, $v$, $w$ and $a$, $b$, $c$ are vertices of ${\cal C}_G$ with $\{ uv, vw, ab, cd \} \subseteq E$, $u \notin \{ b, c \}$, $v \notin \{a, b, c \}$, $w \notin \{ a, b \}$, $a \notin \{ v, w \}$, $b \notin \{ u, v, w \}$ and $c \notin \{ u, v \}$. If ${\cal C}_G$ is a $C_4$, then $u=a$ and $w=c$. Figure~\ref{fig:cycleProofCliqueChordal} illustrates this notation.
	
\begin{figure}[htb]
    \centering
%	\begin{tikzpicture}
%  
%        \node [nN] (u) at (120:1.5) {}; \node[llbl] (lblu) at (u.west) {$u$};
%        \node [nN] (v) at (180:1.5) {}; \node[llbl] (lblv) at (v.west) {$v$};
%        \node [nN] (w) at (240:1.5) {}; \node[llbl] (lblw) at (w.west) {$w$};
%  
%  
%        \node [nN] (a) at ( 60:1.5) {}; \node[rlbl] (lbla) at (a.east) {$a$};
%        \node [nN] (b) at (  0:1.5) {}; \node[rlbl] (lblb) at (b.east) {$b$};
%        \node [nN] (c) at (-60:1.5) {}; \node[rlbl] (lblc) at (c.east) {$c$};
%        
%        \node [nN] (n) at (0,0) {}; \node[blbl] (lbln) at (n.south) {$n$};
%  
%        \begin{pgfonlayer}{background}
%            \draw (u.center)--(v.center)--(w.center);
%            \draw (a.center)--(b.center)--(c.center);
%  
%            \draw [dashed] (u.center)--(a.center);
%            \draw [dashed] (c.center)--(w.center);
%            
%            \foreach \n in {u,v,b,c}
%                \draw (\n.center)--(n.center);
%        \end{pgfonlayer}
%    \end{tikzpicture}
    \includegraphics{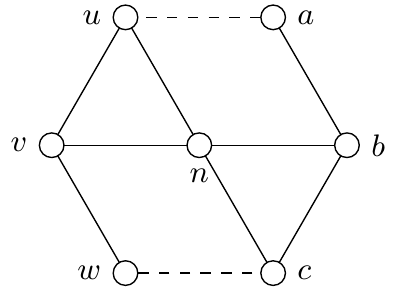}
    \caption{The cycle ${\cal C}_G$ and the vertex~$n$.}
    \label{fig:cycleProofCliqueChordal}
\end{figure}

    There is no clique in $G$ that contains two edges of ${\cal C}_G$, otherwise ${\cal C}_G$ would not be chordless. Thus, each edge represents another clique of $G$. Let $k_{uv}$, $k_{vw}$, $k_{ab}$ and $k_{bc}$ denote these cliques. Because the cliques have a common vertex if the corresponding edges have one, they are forming a cycle ${\cal C}_K$ in $K$.
    
    $K$ is chordal. Therefore, ${\cal C}_K$ has at least one chord. Without loss of generality, let $k_{uv}k_{bc}$ be such a chord. Thus, $k_{uv}$ and $k_{bc}$ have a common vertex~$n$ that is adjacent to $u$, $v$, $b$ and $c$.

    If $w = c$, $n$ is adjacent to $w$. If $w \neq c$, the sequence $(v, n, c, \ldots, w)$ induces a chordless cycle smaller than ${\cal C}_G$. In both cases ${\cal C}_G$ is not the smallest cycle~${\cal C}$ such that there is no path in $G[N(v)]$ connecting the two neighbours of $v$ in~${\cal C}$. 
    
    Therefore, $G$ is nearly locally connected. \qed
\end{proof}

It follows:

\begin{corollary}
	The 3-Colourability problem can be solved in linear time for dually chordal and clique chordal graphs.
\end{corollary}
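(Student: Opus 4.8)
The plan is to assemble the corollary from three ingredients already established, since each of them does the real work. First I would recall that the preceding lemma shows every block of a clique-chordal graph is locally connected. Combined with the fact, cited from \cite{duallyChordal}, that dually chordal graphs form a subclass of clique-chordal graphs, this immediately guarantees that every block of a dually chordal graph is locally connected as well. Hence both classes satisfy the input hypothesis of Algorithm~\ref{algo:3colLocCon}.

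Next I would invoke Theorem~\ref{theo:3colLocConAlog}, which states that Algorithm~\ref{algo:3colLocCon} correctly decides 3-Colourability and produces a valid 3-colouring whenever one exists, running in linear time on any graph whose blocks are locally connected. Applying this theorem to a clique-chordal (respectively dually chordal) graph $G$ directly yields a linear-time decision procedure together with a colouring. I would emphasise that no explicit block-decomposition step is required as preprocessing: the correctness argument for the algorithm already handles articulation points by treating them as the seed vertices of a new locally connected block (as in the proof of Theorem~\ref{theo:3colLocConAlog}), so a single pass suffices and the overall running time stays linear.

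The only point that needs care is confirming that the graph actually meets the algorithm's precondition, but this is precisely the content of the preceding lemma together with the subclass inclusion, so no further structural analysis is needed. I do not expect any genuine obstacle here: the substantive effort --- the constructive characterisation of locally connected graphs (Theorem~\ref{theo:locConConstruction}), the correctness and linear-time analysis of the algorithm (Theorem~\ref{theo:3colLocConAlog}), and the proof that clique-chordal blocks are locally connected --- has all been carried out earlier, and the corollary is a straightforward synthesis of these results.
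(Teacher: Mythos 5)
Your proposal is correct and matches the paper's own (implicit) argument exactly: the paper presents this corollary with just ``It follows:'' after the lemma on blocks of clique-chordal graphs, relying on the same three ingredients you name --- the subclass inclusion from \cite{duallyChordal}, the lemma that blocks of clique-chordal graphs are locally connected, and Theorem~\ref{theo:3colLocConAlog}. Nothing further is needed.
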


\section{Conclusion}
After introducing dually chordal graphs and investigating the connection between $K_4$-free dually chordal and perfect graphs, this paper presented a linear time algorithm to find a 3-colouring by using the structure of locally connected graphs. This algorithm also computes
a correct 3-colouring for each graph whose blocks are locally connected.

\subsubsection{Acknowledgement.}
The author is grateful to H.\,N. de~Ridder for stimulating discussions.

\end{document}